\newcommand\mvector{\boldsymbol}
\newcommand\vc{\mvector{c}}
\newcommand\vp{\mvector{p}}
\newcommand\vq{\mvector{q}}
\newcommand\vr{\mvector{r}}
\newcommand\vgamma{\mvector{\gamma}}
\newcommand\field{\mathbb}
\newcommand\R{\field{R}}
\newcommand\bbS{\mathbb{S}}
\newcommand\EE{\field{E}}
\newcommand\C{\field{C}}
\newcommand\N{\field{N}}
\newcommand\bbH{\field{H}}
\renewcommand\Re{\operatorname{Re}}
\renewcommand\Im{\operatorname{Im}}
\newcommand\tr{\operatorname{Tr}}
\newcommand\grad{\operatorname{grad}}
\newcommand\rmd{\mathrm{d}}
\newcommand\rmi{\mathrm{i}\mspace{1mu}}
\newcommand\Dt{\frac{\mathrm{d}\phantom{t} }{\mathrm{d}\mspace{1mu} t}}
\newcommand\pder[2]{\dfrac{\partial #1 }{\partial #2}}
\newcommand\abs[1]{\lvert #1 \rvert}
\newcommand\sk{\operatorname{S}_{\kappa}}
\newcommand\ck{\operatorname{C}_{\kappa}}
\newcommand\mtext[1]{\quad\text{#1}\quad}
\theoremstyle{plain}
\newtheorem{theorem}{Theorem}
\newtheorem{lemma}[theorem]{Lemma}
\newtheoremstyle{note}{\topsep}{\topsep}{\slshape}{}{\scshape}{}{ }{}
\theoremstyle{note}
\numberwithin{equation}{section}
\numberwithin{theorem}{section}
\begin{document}
\thispagestyle{empty}
\vspace*{1em}
\begin{center}
\LARGE\textbf{Necessary conditions for classical super-integrability of a certain family of  potentials in constant curvature spaces }
\end{center}
\vspace*{0.5em}
\begin{center}
\large Andrzej J.~Maciejewski$^1$, Maria Przybylska$^{2,3}$ and Haruo
Yoshida$^4$
\end{center}
\vspace{2em}
\hspace*{2em}\begin{minipage}{0.8\textwidth}
\small
$^1$J.~Kepler Institute of Astronomy,
  University of Zielona G\'ora,
  Licealna 9, \\\quad PL-65--417 Zielona G\'ora, Poland
  (e-mail: maciejka@astro.ia.uz.zgora.pl)\\[0.5em]
$^2$Toru\'n Centre for Astronomy,
  N.~Copernicus University,
  Gagarina 11,\\\quad PL-87--100 Toru\'n, Poland,
  (e-mail: Maria.Przybylska@astri.uni.torun.pl)\\[0.5em]
$^3$Institute of Physics,
  University of Zielona G\'ora,
  Licealna 9, \\\quad PL-65--417 Zielona G\'ora, Poland
  \\[0.5em]
$^4$National Astronomical Observatory, 2-21-1 Osawa,  Mitaka, 181-8588 Tokyo,
Japan, (e-mail:
h.yoshida@nao.ac.jp)
\end{minipage}\\[2.5em]
%\maketitle
%\vspace*{1.5em}
   \centerline{Version of: \today}
\vspace*{1em}

\noindent {\small \textbf{Abstract.}  We formulate the necessary
  conditions for the maximal super-integrability of a certain family
  of classical potentials defined in the constant curvature two-dimensional
  spaces. We give examples of  homogeneous potentials of degree
  $-2$ on $\EE^2$ as well as their equivalents on $\bbS^2$ and
  $\bbH^2$ for which these necessary conditions are also sufficient.
  We show explicit forms of the additional first integrals which
  always can be chosen polynomial with respect to the momenta and which
  can be of an arbitrary high degree with respect to the momenta.  }

{\textbf{Key words:} super-integrable systems,  integrability,
 Hamiltonian equations, differential Galois integrability obstructions.}

{\textbf{PACS numbers:}  02.30.Ik; 45.20.Jj;  02.40.Ky; 45.50.Jf}
\section{Introduction}
In this paper we consider classical Hamiltonian systems with $n$
degrees of freedom given by Hamiltonian function $H(\vq, \vp)$, where
$\vq=(q_1, \ldots, q_n)$ are canonical coordinates and $\vp=(p_1,
\ldots, p_n)$ are the canonical momenta. We say that such a system is
maximally super-integrable if the Hamilton's equations of motion
\begin{equation}
  \label{eq:hem}
  \Dt q_i=\pder{H}{p_i}(\vq, \vp), \qquad \Dt p_i=-\pder{H}{q_i}(\vq,
  \vp), \qquad i=1,\ldots, n,
\end{equation}
admit $2n-1$ functionally independent first integrals such that among
them $n$ commute.  Although such systems are highly exceptional they
attract a lot of attentions. Motivation for study of super-integrable
systems comes from the classical as well as from quantum physics, see,
e.g., \cite{MR2105429}. Classical maximal integrability implies that
all bounded trajectories are closed and the motion is periodic.  In
quantum mechanics maximal super-integrability means the existence of
$2n-1$ well defined, algebraically independent operators (including
Hamiltonian) among them $n$ pairwise commute see
e.g. \cite{Gravel:02::,Rodriguez:02::} and references therein.  For
quantum systems the maximal super-integrability implies the degeneracy
of energy levels.  The problem of construction of integrable,
super-integrable and maximally super-integrable quantum systems from
the corresponding classical ones is very complicated, see
e.g. \cite{Hietarinta:84::,Gravel:02::}, and it will be not considered
in this paper.

Recently, some remarkable families of super-integrable systems were
found.  In \cite{1751-8121-42-24-242001,2010JPhA...43a5202T} the
authors introduced a family of quantum and classical systems for which
the classical Hamiltonian function in polar coordinates $(r,\varphi)$
is given by
\begin{equation}
  \label{eq:hnf}
  H_n^{(0)}= \frac{1}{2}\left(p_r^2 +\frac{p_{\varphi}^2}{r^{2}} \right)
  +V_n^{(0)}(r,\varphi),
\end{equation}
where the potential $V_n^{(0)}(r,\varphi)$ has the form
\begin{equation}
  \label{eq:V0}
  V^{(0)}_n(r,\varphi) :=\frac{a}{r^2\cos^2(n\varphi)}+ 
  \frac{b}{r^2\sin^2(n\varphi)},
\end{equation}
$n$ is an integer, $a$, and $b$ are parameters\footnote{In the cited
  paper the potential function contains the harmonic oscillator term
  $\omega^2r^2/2$. }.  The system is integrable and it has the
following first integral
\begin{equation}
  \label{eq:gic}
  G:=\frac{1}{2} p_{\varphi}^2+r^2V_n^{(0)}(r,\varphi).
\end{equation}
In quantum version momenta are replaced by appropriate partial
derivative operators.  As it was shown in
\cite{1751-8121-42-24-242001}, for small integer values of $n$ the
quantum system is super-integrable, and moreover the degree with
respect to the momenta of the second additional first integral grows
with $n$. On this basis a conjecture that the system is
super-integrable for an arbitrary $n$ was formulated. Later it was
justified that in fact the quantum system is super-integrable for any
integer odd $n$ in \cite{1751-8121-43-8-082001} and even for rational
$n$ in \cite{Kalnins:arXiv1002.2665}. For the classical system in
\cite{2010JPhA...43a5202T} it was shown that all bounded trajectories
are closed for all integer and rational values of $n$ and
in~\cite{1751-8121-43-9-092001} forms of first integrals were
described.

The second example needs a more detailed presentation. We consider a
point with the unit mass moving on a sphere $\bbS^2\subset \R^3$ under
influence of Hooke forces.  A Hooke centre located at point
$\vr\in\bbS^2$ generates a potential field of forces. The value of the
potential at point $\vgamma\in\bbS^2$ is
\begin{equation}
  \label{eq:sh}
  V=\frac{\alpha}{(\vgamma\cdot\vr)^2},
\end{equation}
where $\alpha$ is the intensity of the centre.  In \cite{Borisov:09::}
the authors investigated the problem of a point mass moving on a
sphere $\bbS^2$ in the field of odd number $n=2l+1$ of Hooke centres
with equal intensities located in a great circle at the vortexes of
the regular $n$-gon.  We can assume that the Hooke centres are located
at points
\begin{equation}
  \label{eq:nc}
  \vr_{k;n}=\left(   \sin \varphi_{k;n} , \cos\varphi_{k;n}, 0 \right), 
  \mtext{where}\varphi_{k;n}:= \frac{2\pi k}{n} \mtext{for}
  k=1,\ldots, n.
\end{equation}
Thus, the potential has the form
\begin{equation}
  \label{eq:vnk}
  V_n^{(1)}:=     \sum_{k=1}^n\frac{\alpha}{\sin^2\theta
    \sin^2\left(\varphi+\varphi_{k;n}\right)}=
  \frac{\alpha n^2}{\sin^2\theta \sin^2n\varphi}. 
\end{equation}
In the above we used the following trigonometric identity
\begin{equation*}
  \sum_{k=1}^n\frac{1}{
    \sin^2\left(\varphi+\varphi_{k;n}\right)}=
  \frac{n^2}{ \sin^2n\varphi},
\end{equation*}
see e.g. \cite{Jakubsky2005154}.

Thus denoting $a:=\alpha n^2$, we can write the Hamiltonian of the
system in the following form
\begin{equation}
  \label{eq:hs1}
  H_n^{(1)}=\frac{1}{2}\left(p_{\theta}^2+
    \frac{p_{\varphi}^2}{\sin^2\theta}\right)+
  \frac{a}{\sin^2\theta \sin^2 n\varphi}. 
\end{equation}
The system is integrable and the following function
\begin{equation}
  \label{eq:fis}
  F_1:= \frac{1}{2}
  p_{\varphi}^2+
  \frac{a}{\sin^2 n\varphi}
\end{equation}
is a first integral. In \cite{Borisov:09::} it was shown that this
Hamiltonian system is super-integrable as it possesses the second
additional first integral of degree either $2n+1$ or $2n+2$ with
respect to the momenta.  The proof of this fact is remarkably natural
and simply. We show later that it allows to give several
generalisations of the above two examples.

The above examples have the same, in some sense nature.  In Cartesian
coordinates potential~\eqref{eq:V0} has the form
\begin{equation}
  V_n^{(0)}=\dfrac{a(q_1^2+q_2^2)^{n-1}}{[\operatorname{Re}(q_1+\rmi q_2)^n]^2}+
  \dfrac{b(q_1^2+q_2^2)^{n-1}}{[\operatorname{Im}(q_1+\rmi q_2)^n]^2}.
  \label{eq:complet}
\end{equation}
Hence it is a rational homogeneous function of degree $-2$.  As it is
well known a natural Hamiltonian system given by
\begin{equation}
  \label{eq:he2} 
  H^{(0)}:=\frac{1}{2}(p_1^2+p_2^2) +V(q_1,q_2),
\end{equation}
with a homogeneous potential of degree $-2$ is integrable because it
possesses the following first integral
\begin{equation}
  \label{eq:fie2}
  F_1:= \frac{1}{2}(q_1p_2-q_2p_1)^2 +(q_1^2+q_2^2)V(q_1,q_2).
\end{equation}
For a point on a sphere we have analogous potential. Namely,
Hamiltonian system given by
\begin{equation}
  \label{eq:hs}
  H_n^{(1)}=\frac{1}{2}\left(p_{\theta}^2+
    \frac{p_{\varphi}^2}{\sin^2\theta}\right)+
  \frac{1}{\sin^2\theta}U( \varphi), 
\end{equation}
is integrable with the first integral
\begin{equation}
  \label{eq:fis1}
  F_1:= \frac{1}{2}
  p_{\varphi}^2+
  U(\varphi).
\end{equation}

Considering the above two examples, we can ask whether a potential of
the prescribed form is super-integrable. It appears that this question
is difficult if we look for an effective and computable necessary
conditions for the super-integrability.

In this paper we consider natural Hamiltonian systems with two degrees
of freedom defined on $T^{\star}M$ where $M$ is a two dimensional
manifold with a constant curvature metrics. More specifically, $M$ is
either sphere $\bbS^2$, Euclidean plane $\EE^2$, or the hyperbolic
plane $\bbH^2$.  In order to consider those three cases simultaneously
we will proceed as in \cite{Herranz:00::,Ranada:99::a} and we define
the following functions
\begin{equation}
  \label{eq:ck}
  \ck(x):= 
  \begin{cases} 
    \cos(\sqrt{\kappa}x) &\text{for} \quad\kappa>0, \\
    1 & \text{for}\quad  \kappa=0, \\
    \cosh(\sqrt{-\kappa}x)&\text{for} \quad \kappa<0,
  \end{cases}
\end{equation}
\begin{equation}
  \label{eq:sk}
  \sk(x):= 
  \begin{cases} 
    \frac{1}{\sqrt{\kappa}}\sin(\sqrt{\kappa}x) &\text{for} \quad\kappa>0, \\
    x & \text{for} \quad\kappa=0, \\
    \frac{1}{\sqrt{-\kappa}}\sinh(\sqrt{-\kappa}x)&\text{for}\quad
    \kappa<0.
  \end{cases}
\end{equation}
These functions satisfy the following identities
\begin{equation}
  \ck^2(x)+\kappa \sk^2(x)=1,\quad \sk'(x)=\ck(x),\quad \ck'(x)=-\kappa\sk(x).
  \label{eq:trigono}
\end{equation}
We consider natural systems Hamiltonian systems with potential
$V(r,\varphi)$ defined by
\begin{equation}
  \label{eq:hamkappa}
  H^{(\kappa)}= \frac{1}{2}\left(p_r^2 + \frac{p_{\varphi}^2}{\sk^2(r)  }\right) +V(r,\varphi).
\end{equation}
The form of the kinetic energy corresponds to the metric on $M$ with
constant curvature $\kappa$.  Our aim is to distinguish a special
class of super-integrable potentials. Inspired by the examples
discussed above we consider potentials of the form
\begin{equation}
  \label{eq:v}
  V^{(\kappa)}(r, \varphi):= \frac{1}{\sk^2(r)}U(\varphi). 
\end{equation}
These potentials are separable.  In fact
\begin{equation}
  \label{eq:int}
  G:=\frac{1}{2} p_{\varphi}^2 + U(\varphi),
\end{equation}
is a first integral of the system and we have also
\begin{equation}
  \label{eq:hg}
  H=\frac{1}{2} p_{r}^2 + \frac{1}{\sk^2(r)}G.
\end{equation}
In order to formulate our main result let us assume that there exists
$\varphi_0\in\C$ such that $U'(\varphi_0)=0$ and
$U(\varphi_0)\neq0$. Under this assumption we define the following
quantity
\begin{equation}
  \label{eq:lam}
  \lambda:= 1-\frac{1}{2}\frac{U''(\varphi_0)}{U(\varphi_0)}.
\end{equation}
The most important result of this paper is formulated in the following
theorem which gives necessary conditions for the super-integrability
of systems~\eqref{eq:hg} with potential~\eqref{eq:v}.
\begin{theorem}
  \label{thm:we}
  Assume that potential $V^{(\kappa)}$ given by~\eqref{eq:v} satisfies
  the following assumption: there exists $\varphi_0\in\C$ such that
  $U'(\varphi_0)=0$ and $U(\varphi_0)\neq0$.  If $V^{(\kappa)}$ is
  super-integrable, then
  \begin{equation*}
    \lambda:= 1-\frac{1}{2}\frac{U''(\varphi_0)}{U(\varphi_0)}=1-s^2,
  \end{equation*}
  for a certain non-zero rational number $s$.
\end{theorem}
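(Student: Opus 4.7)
The strategy is standard for this type of differential-Galois obstruction: produce a particular solution, linearise around it, and translate the super-integrability hypothesis into a constraint on the Galois group of the resulting linear ODE.

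The hypothesis $U'(\varphi_0)=0$ ensures that the submanifold $N:=\{\varphi=\varphi_0,\ p_\varphi=0\}$ is invariant under the flow of~\eqref{eq:hamkappa}, since on $N$ both $\dot\varphi=p_\varphi/\sk^2(r)$ and $\dot p_\varphi=-U'(\varphi_0)/\sk^2(r)$ vanish. The restricted dynamics $\dot r=p_r,\ \dot p_r=2U(\varphi_0)\ck(r)/\sk^3(r)$ is a one-degree-of-freedom system that integrates explicitly via the energy, producing a nontrivial particular solution $\Gamma$ as long as $U(\varphi_0)\neq 0$.

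I would then linearise Hamilton's equations along $\Gamma$ and restrict to the normal directions $(\delta\varphi,\delta p_\varphi)$; after eliminating $\delta p_\varphi$, the normal variational equation for $y:=\delta\varphi$ takes the form
\begin{equation*}
\ddot y+\frac{2\ck(r)}{\sk(r)}\dot r\,\dot y+\frac{U''(\varphi_0)}{\sk^4(r)}\,y=0.
\end{equation*}
Changing the independent variable from $t$ to the algebraic coordinate $z:=\sk^2(r)$ and using the energy relation on $\Gamma$ to eliminate $p_r$ (the convenient choice $E=0$ gives the cleanest form, though any admissible value suffices), the NVE transforms into a Fuchsian equation of Riemann type,
\begin{equation*}
z^2(1-\kappa z)Y''+z\bigl(1-\tfrac{3}{2}\kappa z\bigr)Y'+\frac{\lambda-1}{4}\,Y=0,
\end{equation*}
where I have used $U''(\varphi_0)/U(\varphi_0)=2(1-\lambda)$. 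Its regular singular points are $z=0,\ 1/\kappa,\ \infty$ (with the finite nonzero pole coalescing with $\infty$ in the flat case $\kappa=0$, where the equation reduces to an Euler equation on two points), and the indicial equations yield exponent differences $s,\ \tfrac{1}{2},\ \tfrac{1}{2}$ with $s:=\sqrt{1-\lambda}$.

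Finally I would invoke the super-integrability refinement of the Morales--Ramis theorem. Super-integrability provides a first integral functionally independent of $H$ and of the separation constant $G=\tfrac{1}{2}p_\varphi^2+U(\varphi)$; its leading variational part, restricted to $\Gamma$, produces an additional rational first integral of the NVE beyond those inherited from the separable structure. Together this forces the differential Galois group of the NVE over $\C(z)$ to be \emph{finite}, i.e.\ the equation must admit a basis of algebraic solutions---strictly stronger than the ``abelian identity component'' required for mere Liouville integrability. For a Riemann equation with exponent differences $(s,\tfrac{1}{2},\tfrac{1}{2})$, Schwarz's list (equivalently, Kimura's classification of hypergeometric Galois groups) identifies this precisely with the dihedral entry, and shows that finiteness holds iff $s\in\Q\setminus\{0\}$. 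Since $s^2=1-\lambda$, this is the claimed conclusion $\lambda=1-s^2$ with $s$ a nonzero rational. The non-vanishing of $s$ (equivalently $\lambda\neq 1$) is forced by $U(\varphi_0)\neq 0$, for otherwise the two exponents at $z=0$ would coalesce and yield a logarithmic solution that cannot be algebraic.

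\textbf{Main obstacle.} The crux is the passage from the classical Morales--Ramis obstruction---abelian identity component of the Galois group, satisfied for many irrational $s$---to the strictly stronger super-integrability obstruction of a \emph{finite} Galois group. One must rigorously translate the extra independent first integral of the Hamiltonian system into an additional rational/algebraic solution of the NVE, and then apply Schwarz's list to a Fuchsian equation with two half-integer exponent differences to extract exactly the dihedral case. A secondary technical point is verifying that the algebraic substitution $z=\sk^2(r)$, combined with the choice of energy level, neither enlarges nor contracts the Galois group, and that the exponent analysis above is valid uniformly across the three curvature regimes $\kappa>0,\ \kappa=0,\ \kappa<0$.
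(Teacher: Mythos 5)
Your proposal follows essentially the same route as the paper: the same invariant plane and particular solution, the same normal variational equation, an algebrization to a Fuchsian equation of hypergeometric type with exponent differences $\bigl(\tfrac12,\tfrac12,\sqrt{1-\lambda}\bigr)$, and the super-integrability criterion (trivial identity component, hence finite Galois group) combined with the Schwarz/Kimura classification; the only cosmetic difference is that the paper keeps the energy $e$ general and uses the two-step substitution $s=\sk(r)$, $z=(es^2-B)/((e-\kappa B)s^2)$ to land exactly on the Gauss equation with singularities at $0,1,\infty$, rather than your $z=\sk^2(r)$ at $E=0$. One small inaccuracy: $s\neq 0$ is not ``forced by $U(\varphi_0)\neq 0$'' (that hypothesis does not prevent $U''(\varphi_0)=0$); it is excluded, as your following clause correctly indicates, because coalescing exponents at $z=0$ produce a logarithmic solution and hence an infinite Galois group, contradicting super-integrability.
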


The necessary conditions for the super-integrability given by
Theorem~\ref{thm:we} are deduced from an analysis of the differential
Galois group of the variational equations along the described
particular solution. Here we refer to our
paper~\cite{Maciejewski:08::c} where the reader will find a
description of applications of the differential Galois theory to a
study of the integrability and the super-integrability as well as an
analysis of the case $\kappa=0$.  Indeed the statement of
Theorem~\ref{thm:we} for the case $\kappa=0$ is just a rephrase of the
previous result written in polar coordinates, as it will be seen in
the next section.

\section{Relation with known necessary conditions for super-integrability}
For a systems   on $\EE^2$  given by natural Hamiltonian~\eqref{eq:he2} 
with a homogeneous potential $V(q_1,q_2)$ of degree $k$ equations of
motion have the form
\begin{equation}
\label{eq:he}
\Dt q_i =p_i, \qquad \Dt p_i=-\pder{V}{q_i}, \quad i=1, 2.
\end{equation}
One can look for their particular solution of the form
$\vq(t)=\varphi(t) \vc $, $\vp(t)=\dot\varphi(t) \vc $, where
$\vc\in\C^2$ is a non-zero vector, and $\varphi(t)$ is a scalar
function. As it is easy to see such a solution exists provided that
$\vc$ is a non-zero solution of $\grad V(\vc)=\vc$, and $\varphi(t)$
satisfies $\ddot \varphi + \varphi^{k-1}=0$.  Vector $\vc$ is called
the Darboux point of potential $V$. Then the necessary conditions for
the integrability and the super-integrability which come from an
analysis of the differential Galois group of the variational equations
along the described particular solution are expressed by means of one
eigenvalue of the Hessian matrix $V''(\vc)$, see \cite{Morales:99::c}
for the integrability obstructions, and \cite{Maciejewski:08::c} for
super-integrability conditions. For matrix $V''(\vc)$ vector $\vc$ is
an eigenvector with the corresponding eigenvalue $(k-1)$. Thus the
other eigenvalue is given by
\begin{equation}
\lambda = \tr V''(\vc) - (k-1) = \nabla^2 V(\vc) - (k-1).
\end{equation}
The mentioned above necessary conditions for the integrability  have
the  form of arithmetic  restrictions imposed on $\lambda$. For
example, in our previous paper \cite{Maciejewski:08::c}, we proved,
among other things the following. 
\begin{theorem}
\label{thm:we2008}
If  a Hamiltonian system given by~\eqref{eq:he2},
with a homogeneous potential $V(q_1,q_2)$ of degree $k$, $\abs{k}\leq
2$  is super-integrable, then for each Darboux point $\vc$  the
corresponding eigenvalue $\lambda$ satisfies  the following
conditions:
\begin{itemize}
\item if $k=2$, then $\lambda=s^2$, where $s$ is a non-zero rational
  number;
\item if $k=1$, then $\lambda=0$;
\item if $k=-1$, then $\lambda=1$;
 \item if $k=-2$, then $\lambda=1-s^2$, where $s$ is a non-zero rational
  number.
\end{itemize}
\end{theorem}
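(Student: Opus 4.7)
The plan is to exploit the particular solution $\vq(t)=\varphi(t)\vc$, $\vp(t)=\dot\varphi(t)\vc$ recalled before the statement and to analyse the differential Galois group of the variational equations along it, via the super-integrability refinement of Morales--Ramis from \cite{Maciejewski:08::c}. Because $\vc$ is an eigenvector of $V''(\vc)$ with eigenvalue $k-1$ and the orthogonal eigenvector carries eigenvalue $\lambda=\tr V''(\vc)-(k-1)$, the variational system decouples into a tangential block (the first variation of $\ddot\varphi+\varphi^{k-1}=0$, which is trivially integrable) and a one-dimensional normal variational equation of the form
\begin{equation*}
\ddot\eta+\lambda\,\varphi(t)^{k-2}\eta=0.
\end{equation*}
Using the scalar energy $\tfrac12\dot\varphi^2+\tfrac1k\varphi^k=E$ and introducing a new independent variable $z$ (a normalised power of $\varphi$), I would rewrite this equation as a Gauss hypergeometric equation on $\CPOne$ with three regular singular points at $0,1,\infty$, whose exponent differences are explicit rational functions of $k$ and of $\lambda$.

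The second step is to invoke the super-integrability criterion of \cite{Maciejewski:08::c}: while Liouville integrability only forces the identity component of the differential Galois group of the normal variational equation to be abelian, the existence of an extra functionally independent first integral supplies a second algebraic invariant and forces the full Galois group of the reduced hypergeometric equation to be finite, i.e.\ all its solutions to be algebraic. For Gauss hypergeometric operators this condition is governed by the Schwarz/Kimura list of algebraic triples, which is what I would match against the operator produced in the first step, separately for each $k\in\{2,1,-1,-2\}$.

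The last step is the case-by-case translation. For $k=2$ the scalar solution is elementary and the change $z\propto\varphi^2$ produces a hypergeometric equation one of whose exponent differences is $\sqrt{\lambda}$ while the other two are fixed rational numbers; finiteness of the Galois group then reads off directly from Kimura's list as $\sqrt{\lambda}\in\Q^{\star}$, i.e.\ $\lambda=s^2$. The case $k=-2$ is symmetric in the sense that a shift of exponent by $1$ coming from the factor $\varphi^{k-2}=\varphi^{-4}$ replaces $\lambda$ by $1-\lambda$, giving the complementary condition $\lambda=1-s^2$. For the odd cases $k=\pm1$ the change of variable collapses one of the three Schwarz exponents, forcing the only compatible finite-group configuration to be the trivial one; a short computation shows this reads $\lambda=0$ when $k=1$ and $\lambda=1$ when $k=-1$.

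The main technical obstacle is the fourth step rather than the reduction to hypergeometric form: Kimura's list is long, and one must verify that among the many algebraic configurations that already satisfy the weaker Morales--Ramis obstruction for integrability, only the specific rational-exponent families survive under the stronger finiteness requirement imposed by the second, independent first integral. The symmetry $k\leftrightarrow -k$ (relating the cases $k=\pm 1$ and $k=\pm 2$ through a known Euler-type transformation of the hypergeometric operator) and the fact that $\vc$ is always an eigenvector with eigenvalue $k-1$ will be used to cut the number of Schwarz triples actually to be examined down to a very short list.
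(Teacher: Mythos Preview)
The present paper does not prove this theorem: it is quoted as a result from the authors' earlier work \cite{Maciejewski:08::c}, and the reader is referred there for the argument. Your proposal --- reduce the normal variational equation $\ddot\eta+\lambda\varphi^{k-2}\eta=0$ along the Darboux solution to a Gauss hypergeometric equation, invoke the criterion (Theorem~1.2 of \cite{Maciejewski:08::c}) that maximal super-integrability forces the identity component of the differential Galois group to be trivial, and then match the exponent differences against the Schwarz/Kimura list case by case in $k$ --- is precisely the method of that paper, so your plan is correct and faithful to the original. The same template is visible in Section~3 here, where the curved-space Theorem~\ref{thm:we} is proved by reducing to the very same hypergeometric operator (equation~\eqref{eq:rf} coincides with equation~(A.9) of \cite{Maciejewski:08::c} at $k=-2$) and then quoting Propositions~A.3 and~A.4 of \cite{Maciejewski:08::c}, which encapsulate the Kimura-list verification you describe. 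Your informal remarks about an exponent shift linking $k=2$ to $k=-2$ and about a degeneration for $k=\pm 1$ point in the right direction but would need to be made explicit; in \cite{Maciejewski:08::c} this bookkeeping is carried out in full.
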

In the polar coordinates homogeneous potential have  the form
\begin{equation}
\label{eq:pw}
V(\vq)= V(r\cos\varphi, r\sin\varphi) = r^k U(\varphi),
\end{equation}
and a Darboux point  is given by 
\begin{equation}
\label{eq:c}
(c_1,c_2) = c(\cos\varphi_0, \sin\varphi_0),  
\end{equation}
where $\varphi_0$ is a solution of $U'(\varphi)=0$ such that
$U(\varphi_0)\neq 0$. 

The  Laplacian $\nabla^2 V$ of function $V(\vq)$  takes the form
\[
\nabla^2 V = \dfrac{\partial^2 V}{\partial q_1^2}+ \dfrac{\partial^2 V}{\partial q_2^2}
= \dfrac{1}{r} \left[ \dfrac{\partial}{\partial r} \left( r \dfrac{\partial V}{\partial r}\right) \right] + \dfrac{1}{r^2} \dfrac{\partial^2 V}{\partial \varphi  ^2}
\]
in polar coordinates, and for $V = r^k U(\varphi  )$,
\[
\nabla^2 V = k^2 r^{k-2} U(\varphi  ) + r^{k-2} U''(\varphi  ) .
\]
Thus, as computed by \cite{0305-4470-30-16-026}, we have
\begin{equation}
 \lambda = k^2 c^{k-2} U(\varphi  _0) + c^{k-2} U''(\varphi  _0) - (k-1)
 = 1 + c^{k-2} U''(\varphi  _0)
 = 1 + \dfrac{U''(\varphi  _0)}{k \; U(\varphi  _0)},
\end{equation} 
and substitution $k = -2$ reproduces $\lambda$ in
~\eqref{eq:lam}. Therefore the statement of Theorem~\ref{thm:we2008}
with $k=-2$ gives Theorem~\ref{thm:we} immediately.  So the novelty of
Theorem~\ref{thm:we} is to confirm that the same statement holds,
independent of the value of the curvature $\kappa$.

\section{Proof of Theorem~\ref{thm:we}}
According to Theorem~1.2 in~\cite{Maciejewski:08::c}, if the
considered system is maximally super-integrable, then the identity
component the differential Galois group of the normal variational
equations along a particular solution is just the identity.

The assumptions in Theorem~\ref{thm:we} guarantee that 
\begin{equation}
\label{eq:ps}
t\longmapsto(\varphi_0, r(t),0,\dot r(t)) 
\end{equation}
is a particular solution of the system provided that $r(t)$ satisfies 
\begin{equation}
\label{eq:ddotr}
\ddot r = 2 U(\varphi_0)\frac{\ck(r)}{\sk(r)^3}.
\end{equation}
We consider a particular solution with the energy $e$, i.e., we fix 
\begin{equation}
\label{eq:e}
e = \frac{1}{2}{\dot r}^2 + \frac {U(\varphi_0)}{\sk(r)^2}.
\end{equation}

Variational equations  along particular solution~\eqref{eq:ps} have  the  form
\begin{equation}
 \begin{bmatrix}
  \dot R\\
\dot \Phi\\
\dot P_R\\
\dot P_{\Phi}
 \end{bmatrix}=
\begin{bmatrix}
 0&0&1&0\\
0&0&0&\sk^{-2}(r)\\
2\sk^{-4}(r)[2\kappa \sk^2(r)-3]U(\varphi_0)&0&0&0\\
0&-\sk^{-2}(r)U''(\varphi_0)&0&0
\end{bmatrix}
\begin{bmatrix}
  R\\
 \Phi\\
 P_R\\
 P_{\Phi}
 \end{bmatrix}.
\end{equation}   
Since the motion takes place in the plane $(r,p_r)$ the normal part of variational equations is
\[
 \dot \Phi=\sk^{-2}(r)P_{\Phi},\qquad \dot
 P_{\Phi}=-\sk^{-2}(r)U''(\varphi_0)\Phi.
\]
We rewrite this system as  one equation of the second order
\[
\ddot \Phi+a(r,p_r)\dot \Phi+b(r,p_r)\Phi=0,
\]
where 
\[
 a(r,p_r)=2\frac{\ck(r)}{\sk(r)}p_r,\quad b(r,p_r)=\sk^{-4}(r)U''(\varphi_0).
\]
Making the following change of the independent variable  $t\mapsto
s=\sk(r)$, we  transform this  equation into a  linear equation with rational coefficients
\begin{equation}
\label{eq:Phiz}
 \Phi''+p(s)\Phi'+q(s)\Phi=0,
\end{equation}
where
\begin{equation}
\begin{split}
&p(s)=\dfrac{\ddot s +\dot s a}{\dot
  s^2}=\frac{1}{s}+\frac{es}{es^2-B}+\frac{\kappa s}{\kappa s^2-1},
\\
&q(s)=\dfrac{b}{\dot s^2}=\dfrac{B ( \lambda-1)}{s^2 (e s^2 - B ) (\kappa s^2 -1)},\qquad  B:= U (\varphi_0).
\end{split}
\end{equation} 
Finally, we perform  one more transformation of the  independent variable  putting 
\begin{equation}
\label{eq:y}
z=\frac{e s^2-B}{(e-\kappa B)s^2}.
\end{equation}
After this transformation equation~\eqref{eq:Phiz} reads  
\begin{equation}
\label{eq:Phiy}
\frac{\rmd^2 \Phi}{\rmd z^2} + P\frac{\rmd \Phi}{\rmd z} +Q\Phi =0, 
\end{equation}
where 
\begin{equation}
\label{eq:PQ}
P=\frac{2z-1}{2z(z-1)}, \qquad Q:= \frac{\lambda-1}{4z(z-1)}.
\end{equation}
This is the Gauss hypergeometric equation  for which the differences of
exponents at $z=0$, $z=1$ and $z=\infty$ are 
\begin{equation}
\label{eq:dex}
\rho=\frac{1}{2}, \qquad \sigma=\frac{1}{2}, \qquad \tau = \sqrt{1-\lambda},
\end{equation}
respectively.  Putting 
\begin{equation}
\label{eq:tor}
w=\Phi \exp\left[\frac{1}{2}\int P(\zeta) \rmd \zeta\right],
\end{equation}
 we transform this  equation to its reduced form 
\begin{equation}
\label{eq:rf}
\frac{\rmd^2 w}{\rmd z^2}=r(z) w,
\end{equation}
where  
\begin{equation}
\label{eq:r}
r(z)=- \frac{4\lambda z (z-1)+3}{16z^2(z-1)^2}.
\end{equation}
This equation coincides with equation (A.9) in
\cite{Maciejewski:08::c} in which we substitute $k=-2$. Thus, we can
apply Proposition~A.3 and Proposition~A.4
from~\cite{Maciejewski:08::c} to equation~\eqref{eq:rf}, and this
exactly gives the thesis of our theorem.
\section{Tremblay-Turbiner-Winternitz (TTW) system: A family of super-integrable systems in Euclidean 
flat space $\EE^2$}

\subsection{Finding the potential by use of the necessary conditions for super-integrability}  

In general, values of $\lambda$ in \eqref{eq:rf} can depend on the
angle $\varphi_0$, or, on the Darboux point, and in order to be
super-integrable, the only necessary condition is that $\lambda = 1 -
s^2$, for a certain non-zero rational number $s$. However, there is a
special class of super-integrable potentials, for which $\lambda$
takes the same admissible value for all choices of the angle
$\varphi_0$. Indeed, the known super-integrable
potential~\eqref{eq:V0} is strongly characterised by possessing this
property as it is shown below.

First, condition $\lambda = 1 - s^2$ with expression \eqref{eq:lam} gives
\begin{equation}
-\dfrac{1}{2} \dfrac{U''(\varphi  _0)}{ \; U(\varphi  _0)} = - s^2 .
 \label{si}
\end{equation} 
Let us change the dependent variable $U(\varphi  )$, or the angular part of the potential, by
\[
U(\varphi  ) =\dfrac{1}{[f(\varphi  )]^2}.
\]
Then at a point where $U'(\varphi  _0) = 0$, we have  $f'(\varphi  _0) = 0$, and furthermore,
\begin{equation}
- \dfrac{1}{2}\dfrac{U''(\varphi  _0)}{ U(\varphi  _0)} =  \dfrac{f''(\varphi  _0)}{f(\varphi  _0)}.
\end{equation} 
Next we force the additional requirement that $\lambda$ takes the same
value $\lambda=1-s^2$, for all Darboux points $U'(\varphi  _0) = 0$.
That is, we have the relation
\begin{equation}
\label{eq:eqf}
f''(\varphi  _0) = - s^2 f(\varphi  _0),
\end{equation}
for all $\varphi _0$ such that $f'(\varphi _0)$ vanishes. Here $s$ is
a non-zero rational number, which is independent of $\varphi _0$. This
requirement does not determine function $f$ uniquely. However, we can
find examples of $f$ satisfying this condition. The simplest and naive
example is the function $f$ which satisfies the differential equation
\begin{equation}
f''(\varphi) = - s^2 f(\varphi).
\end{equation}
This is equivalent to assume that $f$ satisfies~\eqref{eq:eqf} for all  $\varphi
_0\in(0,2\pi)$. Then we find two independent solutions for $f$
\[
 f_1(\varphi  ) = \cos (s \varphi  ), \quad f_2(\varphi  ) = \sin (s \varphi  )
\]
and therefore
\begin{equation}
 U_1(\varphi  ) = \dfrac{1}{\cos^2 (s \varphi  )}, \quad 
 U_2(\varphi  ) = \dfrac{1}{\sin^2 (s \varphi  )}.
\end{equation}
are the desired angular part of the super-integrable potential.
Note that any linear combination of $U_1(\varphi  )$ and $U_2(\varphi  )$, namely
\begin{equation}
 U = a U_1(\varphi  ) + b  U_2(\varphi  ) 
   =  \dfrac{a}{\cos^2 (s \varphi  )} +  \dfrac{b}{\sin^2 (s\varphi  )}
\end{equation}
has the same property, but unless $a b = 0$ the value of $\lambda$ is $\lambda = 1 - (2 s)^2$, instead of $\lambda = 1 - s^2$.
Indeed, for the potential~\eqref{eq:V0} we have the following
\begin{lemma}
  \label{lem:1}
  If $\varphi_0\in \C$ is a solution of equation $ U'(\varphi) = 0 $
  such that $U(\varphi_0)\neq 0$, then
  \begin{equation}
    \label{eq:lama}
    \lambda= 1-\frac{1}{2}\frac{U''(\varphi_0)}{U(\varphi_0)}=
    \begin{cases}
      1 - n^2,  &\text{if}\quad ab=0,\\
      1- (2 n)^2, & \text{otherwise}.
    \end{cases}
  \end{equation}
\end{lemma}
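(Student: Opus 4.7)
The plan is to compute $\lambda = 1 - U''(\varphi_0)/(2U(\varphi_0))$ directly at the critical points of $U(\varphi) = a/\cos^2(n\varphi) + b/\sin^2(n\varphi)$, treating the cases $ab = 0$ and $ab \neq 0$ separately.

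If $ab = 0$, say $b = 0$, write $U = a/f^2$ with $f(\varphi) = \cos(n\varphi)$. This $f$ satisfies $f''(\varphi) = -n^2 f(\varphi)$ identically. Using the identity $-U''(\varphi_0)/(2U(\varphi_0)) = f''(\varphi_0)/f(\varphi_0)$ established in the discussion preceding the lemma (valid at any critical point of $U$ where $f\neq 0$), we obtain $\lambda = 1 - n^2$. The case $a = 0$, $b\neq 0$ is identical with $f(\varphi) = \sin(n\varphi)$.

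If $ab\neq 0$, set $\theta_0 := n\varphi_0$. Both $\sin\theta_0$ and $\cos\theta_0$ must be nonzero, for otherwise one of the two terms of $U$ would not even be defined at $\varphi_0$. A direct differentiation shows that the condition $U'(\varphi_0) = 0$ is equivalent to $a\sin^4\theta_0 = b\cos^4\theta_0$, so the common value
\begin{equation*}
 K := \frac{a}{\cos^4\theta_0} = \frac{b}{\sin^4\theta_0}
\end{equation*}
is well defined. Substituting $a = K\cos^4\theta_0$ and $b = K\sin^4\theta_0$ into
\begin{equation*}
 U(\varphi_0) = \frac{a}{\cos^2\theta_0} + \frac{b}{\sin^2\theta_0}
\end{equation*}
and into the analogous expression for $U''(\varphi_0)$ obtained by two further differentiations, and then collapsing via $\sin^2\theta_0 + \cos^2\theta_0 = 1$, yields $U(\varphi_0) = K$ and $U''(\varphi_0) = 8n^2 K$. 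Hence $\lambda = 1 - 4n^2 = 1 - (2n)^2$, as claimed.

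No step in this plan presents a serious obstacle. The only point requiring a touch of care, in the mixed case, is the observation that when $ab\neq 0$ the critical point $\varphi_0$ with $U(\varphi_0)\neq 0$ must satisfy both $\sin\theta_0\neq 0$ and $\cos\theta_0\neq 0$; this is what licenses the introduction of the quantity $K$, after which the computation reduces to straightforward bookkeeping with the Pythagorean identity.
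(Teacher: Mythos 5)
Your proof is correct and follows essentially the same route as the paper: a case split on $ab=0$ versus $ab\neq 0$, with the critical-point condition $a\sin^4(n\varphi_0)=b\cos^4(n\varphi_0)$ used to evaluate $U(\varphi_0)$ and $U''(\varphi_0)$ in the mixed case (the paper writes this as $b=a\tan^4(n\varphi_0)$, equivalent to your constant $K$). Your use of the identity $-\tfrac12 U''/U=f''/f$ in the $ab=0$ case is a minor cosmetic variation on the paper's direct evaluation at $\sin(n\varphi_0)=0$, and both yield $\lambda=1-n^2$ there and $\lambda=1-(2n)^2$ otherwise.
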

\begin{proof}

For the potential~\eqref{eq:V0} we have
\begin{equation}
  \label{eq:uf}
  U(\varphi) = \frac{a}{\cos^2(n\varphi)}+ \frac{b}{\sin^2(n\varphi)}, 
\end{equation}
so
\begin{equation}
  \label{eq:guf}
  U'(\varphi) = 2na\frac{\sin(n\varphi) }{\cos^3(n\varphi)}-
  2nb\frac{\cos(n\varphi)}{\sin^3(n\varphi)}, 
\end{equation}
and
\begin{equation}
  \label{eq:huf}
  U''(\varphi) = 2n^2a \frac{1+2\sin^2(n\varphi) }{\cos^4(n\varphi)}+
  2n^2b \frac{1+2\cos^2(n\varphi)}{\sin^4(n\varphi)}.
\end{equation}

  Let us assume that $ab\neq 0$. In a case when $a\neq 0$ and $b=0$,
  from~\eqref{eq:guf} we have $\sin(n\varphi_0)=0$.  Thus,
  $U(\varphi_0)=a$, and, by~\eqref{eq:huf}, $U''(\varphi_0)= 2n^2 a$,
  so we have relation~\eqref{eq:lama}. In a similar way we show that
  this formula is valid in the case $a=0$ and $b\neq0$.

  If $ab\neq0$, then from~\eqref{eq:guf} we find that
  \begin{equation*}
    b=a\tan^4(n\varphi_0). 
  \end{equation*}
  Using this relation we obtain
  \begin{equation*}
    U(\varphi_0)= \frac{a}{\cos^4(n\varphi_0)}, \mtext{and} U''(\varphi_0)= \frac{8n^2a}{\cos^4(n\varphi_0)},
  \end{equation*}
  and this finishes the proof.
\end{proof}

\subsection{Checking the super-integrability by separation of variables}
As we have shown that if $n$ is a non-zero rational number, then
potential~\eqref{eq:V0} satisfies the necessary condition for the
super-integrability, next we are going to prove that this potential is
indeed super-integrable.  We demonstrate this giving  an explicit form of the
second additional first integral $F$ which is functionally independent
together with $H_n^{(0)}$ and $G$ given by~\eqref{eq:hnf}
and~\eqref{eq:gic}, respectively.  However, in order to demonstrate
how to derive this first integral we consider at first simplified case
when $b=0$ in potential~\eqref{eq:V0}.  Then Hamiltonian takes the
form
\begin{equation}
  H=\dfrac{1}{2}\left(p_r^2+\dfrac{p_{\varphi}^2}{r^2}\right)
  +\dfrac{a}{r^2\cos^2(n\varphi)}= \dfrac{p_r^2}{2}+\dfrac{1}{r^2}G,    
\label{eq:hamidlo}
\end{equation} 
where $G$ is the first integral given by 
\begin{equation}
  G=\dfrac{p_{\varphi}^2}{2}+\dfrac{a}{\cos^2(n\varphi)}.
\end{equation} 
In order to perform the explicit integration we introduce as in  \cite{Borisov:09::}
 a new independent  variable $\tau$ such that
 $\mathrm{d}\tau/\mathrm{d}t=1/r^2$. Then 
we find that 
\[
p_r=\dfrac{r'}{r^2},\qquad p_{\varphi}=\varphi',
\]
where prime denotes the differentiation with respect to $\tau$.
In effect we have 
\[
H=\dfrac{r'^2}{2r^4}+\dfrac{1}{r^2}G,\mtext{and}
G=\dfrac{\varphi'^2}{2}+\dfrac{a }{\cos^2(n\varphi)},
\]
i.e., we  effectively separated variables
\begin{equation}
 \int \dfrac{\mathrm{d}r}{r\sqrt{2(Hr^2-G)}}=\tau +C_1,\qquad
 \int
 \dfrac{\cos(n\varphi)\mathrm{d}\varphi}{\sqrt{2(G\cos^2(n\varphi)-a )}}=
\tau +C_2.
  \label{eq:ccalki0}
\end{equation}
The explicit forms of these elementary integrals are following
\begin{equation}
  \dfrac{1}{\sqrt{2G}}\arctan\sqrt{\dfrac{Hr^2-G}{G}}=\tau+C_1,
  \label{eq:sum10}
\end{equation}
and
\begin{equation}
  \dfrac{1}{n\sqrt{2G}}\arcsin\left[\sqrt{\dfrac{G}{G-a }}\sin(n\varphi)\right]=\tau+C_2.
  \label{eq:sum20}
\end{equation}
From  \eqref{eq:sum10} and \eqref{eq:sum20} we deduce  that
\begin{equation}
  I=n\sqrt{2G}(C_2-C_1)=\arcsin\left[\sqrt{\dfrac{G}{G-a }}\sin(n\varphi)\right]- 
  n\arctan\sqrt{\dfrac{Hr^2-G}{G}},
\end{equation} 
is a first integral of the system.   Using it we find an algebraic
first integral. To this end we  perform a sequence of transformations
applying the following   formulae
\begin{equation*}
%\begin{split}
  \arcsin z=-\rmi\ln\left(\rmi z+\sqrt{1-z^2}\right),\,\,
  \arccos z=-\rmi\ln\left( z+\sqrt{z^2-1}\right),\,\,
  \arctan z=\dfrac{\rmi}{2}\ln\left(\dfrac{1-\rmi z}{1+\rmi z}\right).
%\end{split}
  %\label{eq:logi}
\end{equation*} 
Using them and making some simplifications  we obtain
\[
I=-\rmi\ln\left\{\left(\dfrac{p_{\varphi}\cos(n\varphi)}{\sqrt{2(G-a )}}+
    \rmi\sqrt{\frac{G}{G-a }}\sin(n\varphi)\right)\dfrac{(\sqrt{2G}-\rmi
    r p_r)^n}{(2H)^{n/2}r^n} \right\}.
\]
From the above formula we deduce that 
\begin{equation}
  \begin{split}
    I_{1}&=(2H)^{n/2}\sqrt{2(G-a )}\exp(\rmi \,I) \\
            &=\dfrac{1}{r^n}\left(p_{\varphi}\cos(n\varphi)+\rmi\sqrt{2G}\sin(n\varphi)\right)(\sqrt{2G}-\rmi r p_r)^n,
  \end{split}
\label{eq:brumbrum}
\end{equation} 
is a first integral of the system.
For rational $n$ this integral
is an algebraic function of Cartesian variables $(q_1,q_2,p_1,p_2)$.  

If the considered system is real,  then one would like to possess real first
integrals. Taking  the real and imaginary parts of $I_1$ (assuming
that all variables are real) we obtain real first integrals. Let us
assume for simplicity that  $n$ is a positive integer. Then 
\begin{equation*}
I_1= r^{-n}\left(p_{\varphi}\cos(n\varphi)+\rmi\sqrt{2G}\sin(n\varphi)\right)
    \sum_{k=0}^n\binom{n}{k}(2G)^{(n-k)/2}(-\rmi)^kr^kp_r^k, 
\end{equation*}
 and from this we obtain   
\begin{equation}
  \begin{split}
    &F_1=\Re
    I_1=\sum_{k=0}^{[n/2]}(-1)^k\binom{n}{2k}(2G)^{\frac{n-2k}{2}}
    \frac{p_r^{2k}}{r^{n-2k}}\left[p_{\varphi}\cos(n\varphi)+
      \dfrac{n-2k}{2k+1}rp_r\sin(n\varphi)\right],\\
    &F_2=\Im
    I_1=\sum_{k=0}^{[n/2]}(-1)^k\binom{n}{2k}(2G)^{\frac{n-2k-1}{2}}
\frac{p_r^{2k}}{r^{n-2k}}\left[2G\sin(n\varphi)-
      \frac{n-2k}{2k+1}rp_rp_{\varphi}\cos(n\varphi)\right].
  \end{split}
\end{equation} 
Here $[x]$ denotes the integer part of $x$.
We note that always one of these first integrals is a polynomial in
momenta $(p_r,p_{\varphi})$. For $n$ even expression $(2G)^{(n-2k)/2}$
is a polynomial and as result $F_1$ is a polynomial in the
momenta. Similarly one can deduce that for $n$ odd $F_2$ is a
polynomial in the momenta.  Let us note that  if we put negative $n$ in  \eqref{eq:hamidlo}, then the potential 
does not change,
thus we can assume that always $n>0$. The same is true also for more general form of potential ~\eqref{eq:V0}.

For positive rational $n=n_1/n_2$ from \eqref{eq:brumbrum} also a polynomial in
the momenta first integral can be constructed. Namely we consider the
new first integral 
\[
I_2:=I_1^{n_2}=r^{-n_1}\left(p_{\varphi}\cos(n\varphi)+\rmi\sqrt{2G}\sin(n\varphi)\right)^{n_2}(\sqrt{2G}-\rmi r p_r)^{n_1}.
\]
Separating real and imaginary parts of this first integral we find
that for  $n_1$ even and $n_2$ odd integral $F_1:=\Re I_2$ is
polynomial in momenta. Moreover,  for odd $n_1$ integral   $F_2:=\Im I_2$ is
polynomial in momenta independently of the parity of $n_2$.  

The described  direct  approach works  perfectly in the same way for
the general form of the potential~\eqref{eq:V0} and it gives the
following  form of the first integral 
\begin{equation}
\label{eq:gI1}
 I_1=r^{-2n}(\sqrt{2G}-\rmi
 rp_r)^{2n}\left[\sqrt{2G}\sin(2n\varphi)p_{\varphi}-2\rmi(G\cos(2n\varphi)+b-a)\right].
\end{equation}
Assuming that $n$ is a positive integer, then 
\begin{equation}
I_1=
    r^{-2n}\left[\sqrt{2G}\sin(2n\varphi)p_{\varphi}-2\rmi(G\cos(2n\varphi)+b-a)\right]
      \sum_{k=0}^{2n}\binom{2n}{k}
    (2G)^{(2n-k)/2}(-\rmi)^kr^kp_r^k,
  \label{eq:firstinto}
\end{equation} 
and the real and imaginary parts of this complex function give additional
first integrals
\[
\begin{split}
  F_1=&\sum_{k=0}^n(-1)^k\binom{2n}{2k}(2G)^{n-k}
\frac{p_r^{2k}}{r^{2(n-k)}}\left[G\sin(2n\varphi)p_{\varphi}-\frac{2(n-k)}{2k+1}
    (G\cos(2n\varphi)+b-a)rp_r\right],\\
  F_2=&\sum_{k=0}^n(-1)^k\binom{2n}{2k}(2G)^{n-k} \frac{p_r^{2k}}{r^{2(n-k)}}\left[\frac{2(n-k)}{2k+1}\sin(2n\varphi)rp_rp_{\varphi}
    +2(G\cos(2n\varphi)+b-a)\right].
\end{split}
\]
In the above formulae $F_1=\Re(I_1)/2\sqrt{2G}$ and
$F_2=\Im(I_1)$. Proceeding in the way similar to the previous case we
can also construct polynomial in the momenta first integrals for
positive rational $n$.

Obtained results can be rewritten immediately for Hamiltonian systems
with indefinite flat form of kinetic energy,  which in polar
coordinates are given by the
following Hamilton function 
\begin{equation}
  H=\dfrac{1}{2}\left(p_r^2-  \frac{p_\varphi^2}{r^2}\right)+V,
\end{equation} 
with potential
\begin{equation}
    V=\dfrac{a }{r^2\cosh^2(n\varphi)}+\dfrac{b }{r^2\sinh^2(n\varphi)}.
\end{equation} 
Coordinates $(r,\varphi)$ are related to the Cartesian coordinates
by the formulae
\begin{equation}
  q_1=r\cosh(\varphi),\qquad q_2=r\sinh(\varphi).
  \label{eq:trans}
\end{equation}
This system is separable   in $(r,\varphi)$ coordinates with first
integral 
\begin{equation}
\label{eq:ggcurved}
G= \frac{1}{2} p_{\varphi}^2 -\dfrac{a }{\cosh^2(n\varphi)}-\dfrac{b }{\sinh^2(n\varphi)}.
\end{equation}

One more additional first integral has the form      
\begin{equation}
I_1=r^{-2n}(\sqrt{2G}-
rp_r)^{2n}\left[\sqrt{2G}\sinh(2n\varphi)p_{\varphi}+2(G\cosh(2n\varphi)+a +b )\right].
\label{eq:ii1}
\end{equation}
It can be obtained either  by a direct integration, or from
integral~\eqref{eq:gI1} by substitutions
\begin{equation}
\varphi\to \rmi \varphi, \qquad
 p_{\varphi}\to -\rmi p_{\varphi} \quad G\to -G,\quad \sqrt{G}\to \rmi\sqrt{G},\quad b\to-b.
\label{eq:substy}
\end{equation}
One can construct also another first integral
\begin{equation}
 I_2=r^{-2n}(\sqrt{2G}+
rp_r)^{2n}\left[\sqrt{2G}\sinh(2n\varphi)p_{\varphi}-2(G\cosh(2n\varphi)+a +b )\right]
\label{eq:ii2}
\end{equation}
from integral~\eqref{eq:gI1} choosing the other square root of $-G$,
i.e. making the substitution $\sqrt{G}\to -\rmi\sqrt{G}$ in ~\eqref{eq:gI1}. Then, for 
$n\in\N$, either $I_1+I_2$, or  $I_1-I_2$, is polynomial in momenta
first integral.  In  general case for  positive rational $n=n_1/n_2$,
either   
$F_1:=I_1^{n_2}+I_2^{n_2}$, or $F_1:=I_1^{n_2}+I_2^{n_2}$ is a first
integral which is polynomial in momenta.

\subsection{Other  form of the additional integral, polynomial in momenta}
In the previous section we showed that the additional first integral is polynomial in polar
momenta $p_{\varphi}$ and $p_r$. Here we show an approach which allows to demonstrate that this
integral is expressible in terms of polynomials closely related with the Chebyshev polynomials.
The obtained form of the first integral shows that is rational in Cartesian variables
$(q_1, q_2, p_1, p_2)$ and polynomial in momenta $(p_1, p_2)$.

Let us introduce double  spherical coordinates 
\begin{equation}
 q_1=r\cos\varphi,\quad q_2=r\sin\varphi,\qquad  p_1=p\cos\psi,\quad p_2=p\sin\psi.
\label{eq:polar}
\end{equation} 
Let us consider for example natural   Hamiltonian with potential \eqref{eq:complet} for $a=1$ and $b=0$. In polar coordinates 
\eqref{eq:polar} Hamiltonian takes the form
\[
 H=\dfrac{1}{2}p^2+\dfrac{1}{r^2\cos^2(n\varphi)},
\]
and Hamiltonian equations transform into
\begin{equation}
 \begin{split}
\dot r&=p\cos(\varphi-\psi),\\
\dot \varphi&=-\dfrac{p}{r}\sin(\varphi-\psi),\\
\dot p&=-\dfrac{2}{r^3\cos^3(n\varphi)}\left[\dfrac{n-1}{2}\cos\left((n+1)\varphi-\psi\right)-\dfrac{n+1}{2}\cos\left((n-1)\varphi+\psi\right)\right],\\
\dot \psi&=-\dfrac{2}{pr^3\cos^3(n\varphi)}\left[\dfrac{n-1}{2}\sin\left((n+1)\varphi-\psi\right)+\dfrac{n+1}{2}\sin\left((n-1)\varphi+\psi\right)\right].
 \end{split}
\label{eq:hamcyl}
\end{equation} 
Let us note that transformation \eqref{eq:polar} is not canonical. In these coordinates 
Jacobi first integral takes the form
\[
 I_0=r^2p^2\sin^2(\varphi-\psi)+\dfrac{2}{\cos^2(n\varphi)}.
\]
Let us look for an additional first integral of the form
\[
I=p^n\sin(n\psi)+\sum_{i=1}^{[n/2]}(-2)^i\dfrac{p^{n-2i}}{r^{2i}\cos^{2i}(n\varphi)}\sum_{m=i-1}^{n-1-i}a_{i,m}\sin[2(n-m-1)\varphi-(n-2m-2)\psi],
\]
where $a_{im}$ are unknown
constant coefficients. Substitution of these formulae into condition $\dot I=0$ yields the following recurrence equation on $a_{im}$
\begin{equation}
\begin{split}
& [i(n-1)+m+1]a_{i+1,m+1}-[i(n+1)+m+2]a_{i+1,m}=(n+1)(n-m-i-1)a_{i,m}\\
&-(n-1)(m-i+2)a_{i,m+1},\qquad i=1,\ldots,\left[\dfrac{n-2}{2}\right],\quad m=i,\ldots,n-i-2.
\end{split}
\end{equation} 
It has the following solution
\begin{equation}
 a_{i,m}=\dfrac{(m+2-i)_i(n-m-i)_{i-1}}{(1)_{i-1}(2)_{i-1}},
\end{equation} 
where 
\[
 (a)_n=a (a+1) \cdots (a+n-1)
\]
is the Pochhammer symbol.

We define homogeneous polynomials $f_n$ and $g_n$ and $F_{n}$,
$G_{n}$ in the following way
\[
 (q_1+\rmi q_2)^n=f_n+\rmi g_n,\qquad  (p_1+\rmi p_2)^n=F_n+\rmi G_n.
\]
From this definition the connection of polynomials $f_n,g_n$ as well as $F_n,G_n$ with Chebyshev polynomials is obvious,
see e.g. Section~2 in \cite{Freudenburg:09::}.

Then the first integral  $I$ in the Cartesian coordinates has the 
form 
\[
I=G_{n}+\sum_{i=1}^{[n/2]}\dfrac{(-2)^i}{f_{n}^{2i}}\sum_{m=i-1}^{n-1-i}a_{i,m}r^{2[(i-1)(n-1)+m]}p^{2(m-i+2)}S_{i,m},
\]
where 
\[
S_{i,m}:=\Im\left[(p_{1}-\rmi p_{2})^{n-2m-2}(q_{1}+\rmi
  q_{2})^{2(n-m-1)}\right]. 
\]
Let us notice that 
\[
\Im\left[(p_{1}-\rmi p_{2})^{\alpha}(q_{1}+\rmi
  q_{2})^{\beta}\right]=F_{\alpha}g_{\beta}-G_{\alpha}f_{\beta}.
\]
Hence the above first integral is rational in $(q_1,q_2,p_1,p_2)$  and
polynomial in $(p_1,p_2)$.

In the similar way one can treat the general potential
\eqref{eq:complet}, however in this case the form of the first
integral is much more complicated.

\subsection{Generalisation of TTW system to a  system on $\bbS^2$ and $\bbH^2$}
In this section we consider systems given by the Hamiltonian
function~\eqref{eq:hamkappa} with potential   
\begin{equation}
\label{eq:vv}
V^{(\kappa)}_{n}(r, \varphi):= \frac{1}{\sk^2(r)}U(\varphi), 
\end{equation}
where 
\begin{equation*}
 U(\varphi)=\dfrac{a }{\cos^2(n\varphi)}+\dfrac{b
 }{\sin^2(n\varphi)}. 
\end{equation*}
This is a natural generalisation of the system considered in the
previous subsection onto the spaces with a constant non-zero
curvature\footnote{During the final stage of preparation the
   work of \cite{Kalnins:arXiv1006.0864} appeared where the reader will
  find other examples of super-integrable systems on constant
  curvature spaces.}.  It is not difficult to show that for this
potential~Lemma~\ref{lem:1} applies. That is, if the potential is
super-integrable, then $n$ is non-zero rational number. We show that
in fact those potentials are super-integrable, i.e., that the
necessary conditions of Theorem~\ref{thm:we} are also sufficient.  To
this end it is enough to perform the explicit integration similar to
that done in the previous subsections. It gives the following form of
the first integral
\begin{equation}
 I=\left(\sqrt{2G}\,\frac{\ck(r)}{\sk(r)}+\rmi p_r\right)^{2n}\left(\sqrt{G}p_{\varphi}\sin(2n\varphi)
+\rmi\sqrt{2}(G\cos(2n\varphi)+
b-a )\right),
\label{eq:brumfr}
\end{equation}
 and the first integral $G$ takes the form 
\begin{equation*}
G=\frac{1}{2}p_{\varphi}^2+U(\varphi).
\end{equation*}
Assuming that $n\in\N^{\star}:=\N\setminus\{0\}$, $a,b\in\R$, and taking    real and imaginary
parts of \eqref{eq:brumfr} we obtain  the following explicit  forms of first integrals
\[
 \begin{split}
I_1&=\sqrt{G}\Re I=  \sum_{j=0}^n(-1)^j\binom{2n}{2j}(2G)^{n-j}\left(\frac{\ck(r)}{\sk(r)}\right)^{2n-2j-1}p_r^{2j}
\left[Gp_{\varphi}\sin(2n\varphi)\frac{\ck(r)}{\sk(r)}\right.\\
&\left.-\frac{2(n-j)}{2j+1}p_r(G\cos(2n\varphi)+b-a )\right],\\
I_2&=\frac{\Im I}{\sqrt{2}}=\sum_{j=0}^n(-1)^j\binom{2n}{2j}(2G)^{n-j}\left(\frac{\ck(r)}{\sk(r)}\right)^{2n-2j-1}p_r^{2j}
\left[\frac{\ck(r)}{\sk(r)}(G\cos(2n\varphi)+b-a )\right.\\
&\left.+\frac{n-j}{2j+1}p_rp_{\varphi}\sin(2n\varphi)\right].
 \end{split}
\]
  
Analogous calculations can be repeated for potential
\[
 U(\varphi)=\dfrac{a }{\cosh^2(n\varphi)}+\dfrac{b }{\sinh^2(n\varphi)}.
\]

The Jacobi first integral for it takes the form \eqref{eq:ggcurved} and this one obtained from separation of variables in 
polar coordinates is
\[
\begin{split}
 I&=\left(\sqrt{2G}\,\frac{\ck(r)}{\sk(r)}+ p_r\right)^{2n}\left(-\sqrt{G}p_{\varphi}\sinh(2n\varphi)+\sqrt{2}(G\cosh(2n\varphi)+
a+b )\right)\\
&=\left(-\sqrt{G}p_{\varphi}\sinh(2n\varphi)+\sqrt{2}(G\cosh(2n\varphi)+
a+b )\right)\sum_{j=0}^{2n}\binom{2n}{j}(2G)^{(2n-j)/2}\left(\frac{\ck(r)}{\sk(r)}\right)^{2n-j}p_r^j.
\end{split}
\]
It can be also obtained directly from \eqref{eq:brumfr} using the substitutions \eqref{eq:substy}. One can also construct first integral
using different root of $-G$ i.e. making the substitution $\sqrt{G}\to -\rmi\sqrt{G}$  in \eqref{eq:brumfr}  and we obtain the following form
\[
 I=\left(\sqrt{G}p_{\varphi}\sinh(2n\varphi)+\sqrt{2}(G\cosh(2n\varphi)+
a+b )\right)\sum_{j=0}^{2n}(-1)^j\binom{2n}{j}(2G)^{(2n-j)/2}\left(\frac{\ck(r)}{\sk(r)}\right)^{2n-j}p_r^j.
\]
Proceeding in the way similar to the previous cases we
can also construct polynomial in the momenta first integrals for positive rational $n$.
\section*{Acknowledgements}
The authors  are very thankful to anonymous referees for their remarks, comments and suggestions that allowed to improve considerably
this paper. 

The first two authors are very grateful to  Alexey V. Borisov, Ivan S. Mamaev and Alexander A. Kilin for valuable 
 discussions during their visit  in Izhevsk.

This research has been partially supported by grant No. N N202 2126 33
of Ministry of Science and Higher Education of Poland and by EU
funding for the Marie-Curie Research Training Network AstroNet.

The third author's work was partially supported by the
Grant-in-Aid for Scientific Research of Japan Society for the
Promotion of Sciences (JSPS), No.18540226.

\bibliographystyle{plainnat}
%\bibliography{mathreva,ajm,super,borisov,winternitz1,kalnins1,curved,dgt,churchill}
\newcommand{\noopsort}[1]{}\def\cprime{$'$} \def\cydot{\leavevmode\raise.4ex\hbox{.}}

\end{document}